\documentclass[11pt,letterpaper]{article}

\usepackage[margin=1in]{geometry}
\usepackage{amsmath,amssymb,amsthm,mathtools}
\usepackage[T1]{fontenc}
\usepackage{times}
\usepackage{microtype}
\usepackage{authblk}
\usepackage{xcolor}
\usepackage[colorlinks=true,linkcolor=blue!55!black,citecolor=blue!55!black,urlcolor=blue!55!black]{hyperref}
\hypersetup{
  pdftitle={Shor's Conjecture Is True: Projective Measurements Suffice for Binary Accessible Information},
  pdfauthor={Sunghyeon Jo}
}

\allowdisplaybreaks

\newtheorem{theorem}{Theorem}[section]
\newtheorem{corollary}[theorem]{Corollary}
\newtheorem{proposition}[theorem]{Proposition}
\newtheorem{lemma}[theorem]{Lemma}

\newcommand{\Tr}{\operatorname{Tr}}
\newcommand{\Id}{I}
\newcommand{\E}{\mathcal{E}}
\newcommand{\Iacc}{I_{\mathrm{acc}}}
\newcommand{\cH}{\mathcal{H}}

\title{\Large\bfseries Shor's Conjecture Is True: Projective Measurements Suffice for Binary Accessible Information}
\author{Sunghyeon Jo\thanks{\href{mailto:sjo65@gatech.edu}{\textcolor{black}{sjo65@gatech.edu}}}}
\affil{Georgia Institute of Technology}
\date{}

\begin{document}
\maketitle

\begin{abstract}
Shor conjectured that a von Neumann measurement attains the accessible information of every binary quantum ensemble. We prove the conjecture constructively in arbitrary finite dimension. For every finite-outcome positive operator-valued measure (POVM) \(M\), we form an operator \(T_M\) from the posterior label probabilities and show that its spectral projection-valued measure (PVM) \(\Pi_M\) satisfies \(I_{\Pi_M}(X{:}Y)\ge I_M(X{:}Y)\); every rank-one refinement retains the inequality. Two applications of Jensen's operator inequality prove the comparison and yield an exact concave variational formula for the accessible information. The result is a special case of the general theorem of Fang, Fawzi, and Fawzi on measured \(f\)-divergences; the proof below isolates the binary argument and makes the replacement \(M\mapsto T_M\mapsto\Pi_M\) explicit.
\end{abstract}

\section{Introduction}

The accessible information of a quantum ensemble is the largest Shannon mutual information between the classical label of the prepared state and the outcome of a quantum measurement. Shor conjectured that, for an ensemble of two states, this optimum is always attained by a von Neumann measurement \cite{Shor2000}. We use the standard finite-dimensional model of a von Neumann measurement as a projection-valued measure (PVM). Shor allowed higher-rank projections, but refining a PVM to rank one cannot decrease mutual information, so it is equivalent here to optimize over measurements in orthonormal bases.

Keil proved the conjecture for qubit ensembles \cite{Keil}. A December 2025 preprint by Thai and Dall'Arno described the general finite-dimensional case as open \cite{ThaiDallArno}. Theorem~2 of Fang, Fawzi, and Fawzi, first posted in February 2025, implies the conjecture after binary mutual information is identified with the measured \(f_p\)-divergence described in Section~5 \cite{FFF}. We make this consequence explicit and give a short self-contained constructive proof.

Given a POVM \(M=\{M_y\}\), the proof forms \(T_M\) by weighting each effect \(M_y\) by the posterior probability \(\Pr(X=1\mid Y=y)\), with a fixed convention for outcomes of probability zero. The spectral PVM of \(T_M\) has at least as much mutual information as \(M\). The proof applies Jensen's operator inequality twice and reduces the comparison to a scalar maximization. An affine endpoint regularization handles singular states without changing the spectral projections of \(T_M\).

All Hilbert spaces are finite dimensional, and all logarithms are natural. Division by \(\log 2\) gives information in bits. We state the theorem for finite-outcome POVMs. This entails no loss for accessible information: Davies' theorem guarantees an optimal POVM with at most \(d^2\) outcomes in dimension \(d\) \cite{Davies1978}.

\section{Comparison lemmas}

Let \(\cH\) be a complex Hilbert space of dimension \(d\), and let
\begin{equation}
  \E=\{(p,\rho_1),(q,\rho_0)\},
  \qquad q=1-p,\qquad 0<p<1,
  \label{eq:ensemble}
\end{equation}
where \(\rho_0\) and \(\rho_1\) are density operators on \(\cH\). A finite-outcome POVM is a family \(M=\{M_y\}_{y\in\mathcal Y}\) of positive semidefinite operators such that \(\sum_yM_y=\Id\). A PVM, or projective measurement, is a POVM whose elements are pairwise orthogonal projections. Set
\begin{equation}
  a_y=\Tr(\rho_1M_y),\qquad
  b_y=\Tr(\rho_0M_y),\qquad
  r_y=pa_y+qb_y.
  \label{eq:abr}
\end{equation}
For each outcome, define
\begin{equation}
  \eta_y
  :=
  \begin{cases}
    \dfrac{pa_y}{r_y},&r_y>0,\\[5pt]
    p,&r_y=0,
  \end{cases}
  \qquad
  T_M:=\sum_y\eta_yM_y.
  \label{eq:T-eta}
\end{equation}
For \(r_y>0\), \(\eta_y=\Pr(X=1\mid Y=y)\). If \(r_y=0\), then
\(a_y=b_y=0\), and the second line of \eqref{eq:T-eta} fixes a
convention for an outcome of probability zero. In particular,
\begin{equation}
  0\preceq T_M\preceq\Id.
  \label{eq:T-bounds}
\end{equation}
Let \(X\in\{0,1\}\) be the state label, with \(\Pr(X=1)=p\), and let \(Y\) be the measurement outcome. The induced mutual information is
\begin{equation}
  I_M(X{:}Y)
  =
  \sum_y
  \left[
    pa_y\log\frac{a_y}{r_y}
    +qb_y\log\frac{b_y}{r_y}
  \right],
  \label{eq:MI}
\end{equation}
with the usual continuous conventions at zero.
We also write \(I_M(\E)\) for this quantity when the ensemble must be displayed.

\begin{lemma}[Jensen's operator inequality]
\label{lem:Jensen}
Let \(M=\{M_y\}_{y=1}^m\) be a POVM and let \(c_1,\ldots,c_m>0\). Then
\begin{equation}
  \sum_{y=1}^m(\log c_y)M_y
  \preceq
  \log\!\left(\sum_{y=1}^m c_yM_y\right).
  \label{eq:Jensen}
\end{equation}
\end{lemma}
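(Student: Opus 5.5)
The plan is to derive the inequality from the Hansen--Pedersen form of Jensen's operator inequality, using a Naimark dilation to turn the POVM into a compression. Set $V:\cH\to\cH\otimes\mathbb{C}^m$, $V\psi=\sum_y M_y^{1/2}\psi\otimes e_y$. Then $V^*V=\sum_y M_y=\Id$, so $V$ is an isometry. Put $C=\sum_y c_y\,\Id\otimes|e_y\rangle\langle e_y|$, which is positive definite with spectrum $\{c_y\}$. A direct computation gives
\begin{equation*}
  V^*CV=\sum_y c_yM_y,
  \qquad
  V^*\log(C)\,V=\sum_y(\log c_y)M_y .
\end{equation*}
The claim therefore reads $V^*\log(C)V\preceq\log(V^*CV)$. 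This is exactly the Jensen operator inequality for the operator concave function $\log$ on $(0,\infty)$ under an isometric compression.

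First I check that both sides are well defined. Since every $c_y>0$, we have $\sum_y c_yM_y\succeq(\min_y c_y)\,\Id\succ0$, so the logarithm on the right-hand side makes sense.

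The main step is to justify operator concavity of $\log$ and the Jensen inequality for isometries. Operator concavity follows from the integral representation
\begin{equation*}
  \log t=\int_0^\infty\Bigl(\frac{1}{1+s}-\frac{1}{t+s}\Bigr)\,ds .
\end{equation*}
It then suffices that $t\mapsto-(t+s)^{-1}$ is operator concave, which is the standard operator convexity of the inverse.

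To pass from concavity to the compression inequality, I use the usual unitary-dilation trick. Complete $V$ to a unitary $U$ on $(\cH\otimes\mathbb{C}^m)\oplus\cH$, or equivalently pick a projection $P$ with $V^*CV$ appearing as the corner $PU^*CUP$. Averaging $U^*CU$ against a suitable conjugate unitary $WU^*CUW^*$ produces $\tfrac12(\cdot)+\tfrac12(\cdot)$, which is block diagonal with $V^*CV$ as its upper block. Concavity plus unitary invariance $f(U^*AU)=U^*f(A)U$ then give the inequality on the corner.

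One subtlety needs care. On the complementary block the averaged operator must remain in the domain $(0,\infty)$, which is fine because $C\succ0$. A second subtlety is handling $\log$ on a block diagonal operator blockwise. Alternatively, I may simply cite the Hansen--Pedersen--Davis theorem: for operator concave $f$ and a unital positive map $\Phi$, $\Phi(f(A))\preceq f(\Phi(A))$. Applying it with $\Phi(A)=V^*AV$, which is unital because $V$ is an isometry, gives the claim immediately.

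The step I expect to be the main obstacle is the operator concavity of $\log$ together with the Davis/Choi extension to unital maps. Everything else is bookkeeping. In this paper I would state the dilation and cite the standard theorem rather than reprove it.
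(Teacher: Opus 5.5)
Your proposal is correct and follows the paper's proof. You use the same isometry \(V\xi=\sum_y M_y^{1/2}\xi\otimes e_y\) and diagonal \(C\) to reduce the claim to \(V^*(\log C)V\preceq\log(V^*CV)\), which the paper gets by citing Hansen--Pedersen's Jensen inequality for the operator concave logarithm; your extra sketch (integral representation of \(\log\), pinching/unitary dilation) is sound but not needed beyond that citation.
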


\begin{proof}
Define \(V:\cH\to\cH\otimes\mathbb C^m\) by
\[
  V\xi=\sum_{y=1}^m M_y^{1/2}\xi\otimes e_y
\]
and \(C=\sum_yc_y\,\Id\otimes|e_y\rangle\langle e_y|\). Then \(V\) is an isometry,
\[
  V^*CV=\sum_yc_yM_y,
  \qquad
  V^*(\log C)V=\sum_y(\log c_y)M_y.
\]
Since \(\log\) is operator concave, Jensen's operator inequality gives
\(V^*(\log C)V\preceq\log(V^*CV)\) \cite{HansenPedersen}.
\end{proof}

\begin{lemma}[Scalar variational identity]
\label{lem:scalar}
For \(A,B\ge0\), define
\begin{equation}
  J_p(A,B)
  :=
  pA\log\frac{A}{pA+qB}
  +qB\log\frac{B}{pA+qB},
  \label{eq:JAB}
\end{equation}
with the usual continuous conventions. Then
\begin{equation}
  J_p(A,B)
  =
  \sup_{0<t<1}
  \left\{
    pA\log\frac{t}{p}
    +qB\log\frac{1-t}{q}
  \right\}.
  \label{eq:scalar-variational}
\end{equation}
If \(A,B>0\), the unique maximizer is
\begin{equation}
  t^*=\frac{pA}{pA+qB}.
  \label{eq:tstar}
\end{equation}
\end{lemma}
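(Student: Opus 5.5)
The plan is to treat the objective as a concave function of \(t\) and maximize it directly, with the degenerate cases handled separately. Put
\[
  g(t):=pA\log\frac{t}{p}+qB\log\frac{1-t}{q},\qquad 0<t<1 .
\]
In the main case \(A,B>0\), \(g\) is a sum of two strictly concave functions with positive coefficients, so it is strictly concave. Its derivative is
\[
  g'(t)=\frac{pA}{t}-\frac{qB}{1-t},
\]
which is strictly decreasing. It vanishes exactly at \(t=pA/(pA+qB)\), and this point lies in \((0,1)\). Hence \(t^*\) from \eqref{eq:tstar} is the unique maximizer, and this settles the uniqueness claim. We also have \(1-t^*=qB/(pA+qB)\).

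Substituting gives
\[
  \frac{t^*}{p}=\frac{A}{pA+qB},\qquad \frac{1-t^*}{q}=\frac{B}{pA+qB},
\]
so \(g(t^*)\) equals \(J_p(A,B)\) from \eqref{eq:JAB} term by term. This proves \eqref{eq:scalar-variational} when \(A,B>0\).

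Next come the boundary cases, where the supremum need not be attained and the continuous conventions matter.
\begin{itemize}
  \item If \(A=B=0\), then \(g\equiv0\) and \(J_p(0,0)=0\).
  \item If \(A>0=B\), then \(g(t)=pA\log(t/p)\). This is increasing, with supremum \(pA\log(1/p)\) as \(t\to1^-\). The convention \(0\log(0/\cdot)=0\) gives \(J_p(A,0)=pA\log\frac{A}{pA}=pA\log\frac1p\), so the two agree.
  \item The case \(A=0<B\) is symmetric: \(t\to0^+\) gives \(qB\log(1/q)\).
\end{itemize}
In these cases the supremum is not attained on the open interval. That is consistent with the statement, which asserts a unique maximizer only for \(A,B>0\).

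None of these steps is a real obstacle. The only point needing care is to make sure the boundary cases use the same conventions as \eqref{eq:JAB}, namely \(0\log 0=0\) and \(0\log(0/0)=0\). That is why I would write them out explicitly rather than appeal to continuity of \(J_p\).
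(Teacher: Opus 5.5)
Your proof is correct and follows the paper's argument essentially verbatim: strict concavity and the unique critical point \(t^*\) when \(A,B>0\), then the three degenerate cases, which you check against the zero conventions more explicitly than the paper does. One minor slip is in your closing aside: when \(A=B=0\) the objective is identically zero, so the supremum is attained at every \(t\) rather than being unattained. This does not affect the identity you prove.
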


\begin{proof}
For \(A,B>0\), the objective in \eqref{eq:scalar-variational} is strictly concave and has derivative
\[
  \frac{pA}{t}-\frac{qB}{1-t}.
\]
Its unique critical point is \eqref{eq:tstar}, and substitution gives \eqref{eq:JAB}. If \(A=0<B\) or \(B=0<A\), the supremum is approached as \(t\downarrow0\) or \(t\uparrow1\), respectively. The case \(A=B=0\) is immediate.
\end{proof}

For \(0\prec T\prec\Id\), define
\begin{equation}
  \Phi_p(T)
  :=
  p\Tr(\rho_1\log T)
  +q\Tr\!\bigl(\rho_0\log(\Id-T)\bigr)
  -p\log p-q\log q.
  \label{eq:Phi}
\end{equation}

\begin{lemma}
\label{lem:spectral}
Let \(0\prec T\prec\Id\), and write its spectral decomposition as
\begin{equation}
  T=\sum_i t_i\Pi_i.
  \label{eq:spectral-decomp}
\end{equation}
Then the PVM \(\Pi=\{\Pi_i\}\) formed by the spectral projections of \(T\) satisfies
\begin{equation}
  I_\Pi(X{:}Y)\ge\Phi_p(T).
  \label{eq:spectral-comparison}
\end{equation}
\end{lemma}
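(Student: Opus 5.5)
Since \(T\) is a function of its own spectral decomposition, the plan is to rewrite \(\Phi_p(T)\) as a sum over the spectral projections and then bound each term by the scalar variational identity. Because \(\Pi_i\) are pairwise orthogonal projections with \(\sum_i\Pi_i=\Id\), the functional calculus gives exact identities, with no inequality needed at this stage:
\[
  \log T=\sum_i(\log t_i)\Pi_i,
  \qquad
  \log(\Id-T)=\sum_i\log(1-t_i)\,\Pi_i .
\]
Here \(0<t_i<1\) for every \(i\) by the hypothesis \(0\prec T\prec\Id\), so all logarithms are finite.

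Next I would set \(a_i=\Tr(\rho_1\Pi_i)\) and \(b_i=\Tr(\rho_0\Pi_i)\), following \eqref{eq:abr} for the PVM \(\Pi\). Using \(\sum_i a_i=\sum_i b_i=1\) to distribute the constants \(-p\log p\) and \(-q\log q\) over the outcomes, the functional becomes
\[
  \Phi_p(T)=\sum_i\Bigl[p a_i\log\frac{t_i}{p}+q b_i\log\frac{1-t_i}{q}\Bigr].
\]

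Each bracket has exactly the form of the objective in \eqref{eq:scalar-variational}, with \(A=a_i\), \(B=b_i\) and \(t=t_i\in(0,1)\). Lemma~\ref{lem:scalar} therefore bounds it above by \(J_p(a_i,b_i)\). Summing over \(i\), and comparing \eqref{eq:JAB} with \eqref{eq:MI}, we have \(\sum_i J_p(a_i,b_i)=I_\Pi(X{:}Y)\), which gives \eqref{eq:spectral-comparison}.

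The only point needing care is the degenerate outcomes where \(a_i=0\) or \(b_i=0\), including \(a_i=b_i=0\). In those cases the bracket is still a finite number, because \(t_i\) lies strictly inside \((0,1)\). It is bounded by the supremum in Lemma~\ref{lem:scalar}, and that lemma's degenerate cases already identify this supremum with \(J_p\) under the continuous conventions. So the argument needs no separate case analysis, and I do not expect any step to be a real obstacle. The substantive work comes later, in relating \(\Phi_p(T_M)\) to \(I_M\) via Lemma~\ref{lem:Jensen}.
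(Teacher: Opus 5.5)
Your proposal is correct and follows the paper's proof: expand \(\Phi_p(T)\) by spectral calculus into \(\sum_i\bigl[pA_i\log(t_i/p)+qB_i\log((1-t_i)/q)\bigr]\) using \(\sum_iA_i=\sum_iB_i=1\), bound each term by \(J_p(A_i,B_i)\) via Lemma~\ref{lem:scalar}, and sum to get \(I_\Pi(X{:}Y)\). Your remark is also right that degenerate outcomes need no separate treatment, because the supremum in Lemma~\ref{lem:scalar} already equals \(J_p\) in those cases.
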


\begin{proof}
Set
\[
  A_i=\Tr(\rho_1\Pi_i),
  \qquad
  B_i=\Tr(\rho_0\Pi_i).
\]
By spectral calculus and Lemma~\ref{lem:scalar},
\begin{align}
  \Phi_p(T)
  &=
  \sum_i
  \left[
    pA_i\log\frac{t_i}{p}
    +qB_i\log\frac{1-t_i}{q}
  \right] \\
  &\le
  \sum_iJ_p(A_i,B_i)
  =
  I_\Pi(X{:}Y).
\end{align}
\end{proof}

\section{Projective measurements suffice}

\begin{theorem}
\label{thm:main}
For every binary ensemble \eqref{eq:ensemble} and every finite-outcome POVM \(M\), the PVM \(\Pi_M\) formed by the spectral projections of \(T_M\) satisfies
\begin{equation}
  I_{\Pi_M}(X{:}Y)\ge I_M(X{:}Y).
  \label{eq:main-comparison}
\end{equation}
The same inequality holds for every rank-one refinement of \(\Pi_M\).
\end{theorem}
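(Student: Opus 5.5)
The plan is to sandwich the functional \(\Phi_p\) between the two mutual informations: show \(I_M(X{:}Y)\le\Phi_p(T)\) for a suitable \(T\) with the same spectral projections as \(T_M\), and then invoke Lemma~\ref{lem:spectral} to get \(\Phi_p(T)\le I_{\Pi_M}(X{:}Y)\). The only complication is that \(T_M\) itself may be singular, since \(\eta_y\in\{0,1\}\) is possible, while \(\Phi_p\) is only defined for \(0\prec T\prec\Id\). I will handle this with an affine regularization. For \(0<\varepsilon<1\) set
\[
  \eta_y^\varepsilon=(1-\varepsilon)\eta_y+\varepsilon p\in(0,1),
  \qquad
  T_\varepsilon=\sum_y\eta_y^\varepsilon M_y=(1-\varepsilon)T_M+\varepsilon p\,\Id .
\]
Then \(0\prec T_\varepsilon\prec\Id\). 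Because \(T_\varepsilon\) is an increasing affine function of \(T_M\), it has exactly the same spectral projections, so its spectral PVM is \(\Pi_M\).

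The key step is two applications of Lemma~\ref{lem:Jensen}, once with the weights \(c_y=\eta_y^\varepsilon\) and once with \(c_y=1-\eta_y^\varepsilon\). Since \(\sum_yM_y=\Id\), we have \(\Id-T_\varepsilon=\sum_y(1-\eta_y^\varepsilon)M_y\), so the lemma gives
\[
  \log T_\varepsilon\succeq\sum_y(\log\eta_y^\varepsilon)M_y,
  \qquad
  \log(\Id-T_\varepsilon)\succeq\sum_y\log(1-\eta_y^\varepsilon)\,M_y .
\]
Taking traces against \(p\rho_1\) and \(q\rho_0\), and using \(\sum_ya_y=\sum_yb_y=1\) to absorb the constants \(-p\log p-q\log q\), yields
\[
  \Phi_p(T_\varepsilon)\ge\sum_y\Bigl[pa_y\log\frac{\eta_y^\varepsilon}{p}+qb_y\log\frac{1-\eta_y^\varepsilon}{q}\Bigr].
\]

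Next I let \(\varepsilon\downarrow0\) in the right-hand side. Consider a term with \(a_y>0\). Then \(r_y>0\) and \(\eta_y>0\), so \(\log\eta_y^\varepsilon\to\log\eta_y\). Terms with \(a_y=0\) vanish identically, and the same reasoning applies to the \(b_y\) terms. The right-hand side therefore converges to
\[
  \sum_y\Bigl[pa_y\log\frac{a_y}{r_y}+qb_y\log\frac{b_y}{r_y}\Bigr]=I_M(X{:}Y),
\]
where I used \(\eta_y/p=a_y/r_y\) and \((1-\eta_y)/q=b_y/r_y\) whenever \(r_y>0\).

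On the other side, Lemma~\ref{lem:spectral} gives \(I_{\Pi_M}(X{:}Y)\ge\Phi_p(T_\varepsilon)\) for every \(\varepsilon\), and the left-hand side does not depend on \(\varepsilon\). Passing to the limit gives \eqref{eq:main-comparison}. I expect the main delicacy to be this regularization and limit: checking that the affine shift preserves the spectral PVM, including how eigenvalues group into projections, and that no \(\log0\) appears on a term with positive weight.

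For a rank-one refinement \(\{P_{i,k}\}\) of \(\Pi_M\), with \(\Pi_i=\sum_kP_{i,k}\), the outcome of \(\Pi_M\) is a deterministic function of the refined outcome. The data-processing inequality (equivalently, the log-sum inequality applied termwise to \(J_p\)) then shows that refinement cannot decrease mutual information, so the inequality persists.
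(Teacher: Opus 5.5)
Your proof is correct and follows the paper's argument. It uses an affine regularization of \(T_M\) that preserves the spectral PVM, applies Lemma~\ref{lem:Jensen} twice with traces against \(p\rho_1\) and \(q\rho_0\), then Lemma~\ref{lem:spectral}, takes the limit, and uses data processing for the rank-one refinement; the only differences are that you shrink toward \(p\Id\) via \((1-\varepsilon)T_M+\varepsilon p\Id\) instead of the paper's \((1-2\delta)T_M+\delta\Id\), and you write out the termwise limit in more detail.
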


\begin{proof}
For \(0<\delta<1/2\), set
\begin{equation}
  \eta_y^{(\delta)}=(1-2\delta)\eta_y+\delta,
  \qquad
  T_\delta
  =\sum_y\eta_y^{(\delta)}M_y
  =(1-2\delta)T_M+\delta\Id.
  \label{eq:T-delta}
\end{equation}
\[
  \delta\Id\preceq T_\delta\preceq(1-\delta)\Id,
\]
so \(0\prec T_\delta\prec\Id\). The strictly increasing affine map
\(t\mapsto(1-2\delta)t+\delta\) preserves all eigenspaces. Thus
\(T_\delta\) and \(T_M\) have the same spectral PVM \(\Pi_M\).

Apply Lemma~\ref{lem:Jensen} to \(c_y=\eta_y^{(\delta)}\) and
\(c_y=1-\eta_y^{(\delta)}\), and take traces against \(p\rho_1\)
and \(q\rho_0\), respectively. Together with Lemma~\ref{lem:spectral},
this gives
\begin{align}
  I_{\Pi_M}(X{:}Y)
  &\ge \Phi_p(T_\delta) \\
  &\ge
  p\sum_y a_y\log\frac{\eta_y^{(\delta)}}p
  +q\sum_y b_y\log\frac{1-\eta_y^{(\delta)}}q
  =:L_\delta.
  \label{eq:L-delta}
\end{align}
If \(r_y>0\), then
\[
  \frac{\eta_y}{p}=\frac{a_y}{r_y},
  \qquad
  \frac{1-\eta_y}{q}=\frac{b_y}{r_y}.
\]
If \(r_y=0\), then \(a_y=b_y=0\). Since the outcome set is finite,
the continuous conventions at zero give
\[
  \lim_{\delta\downarrow0}L_\delta=I_M(X{:}Y).
\]
This proves \eqref{eq:main-comparison}. A rank-one refinement cannot
decrease mutual information by the data-processing inequality.
\end{proof}

\begin{corollary}[Shor's conjecture]
\label{cor:Shor}
For every binary ensemble on a finite-dimensional Hilbert space,
\begin{equation}
  \Iacc(\E)
  =
  \sup_{M\ {\rm POVM}}I_M(X{:}Y)
  =
  \max_{\substack{\Pi\ {\rm rank\mbox{-}one}\\{\rm PVM}}}I_\Pi(X{:}Y).
  \label{eq:optima}
\end{equation}
\end{corollary}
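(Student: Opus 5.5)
The corollary should follow from Theorem~\ref{thm:main} together with a compactness argument. The plan is to establish the chain
\[
\max_{\Pi\ \text{rank-one PVM}} I_\Pi \le \sup_{M\ \text{POVM}} I_M \le \sup_{\Pi\ \text{rank-one PVM}} I_\Pi,
\]
and then to show that the last supremum is attained. The first inequality is trivial, since every PVM is a POVM. The first equality in \eqref{eq:optima} is simply the definition of $\Iacc(\E)$. We read the supremum over finite-outcome POVMs; by Davies' theorem \cite{Davies1978} this entails no loss.

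For the second inequality, fix any finite-outcome POVM $M$. Theorem~\ref{thm:main} gives $I_{\Pi_M}(X{:}Y)\ge I_M(X{:}Y)$, and it also guarantees that any rank-one refinement $\Pi$ of $\Pi_M$ still satisfies $I_\Pi\ge I_M$. A refinement always exists: choose an orthonormal basis inside each spectral subspace of $T_M$. Taking the supremum over $M$ then yields $\sup_M I_M\le\sup_\Pi I_\Pi$ over rank-one PVMs.

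The remaining step, and the only real point, is to replace this supremum by a maximum. I would identify rank-one PVMs with ordered orthonormal bases $\{u_1,\dots,u_d\}$, or equivalently with the unitary group $U(d)$, which is compact. For each basis the quantities $a_i=\langle u_i,\rho_1u_i\rangle$ and $b_i=\langle u_i,\rho_0u_i\rangle$ depend continuously on the unitary. Each summand $pa\log(a/r)+qb\log(b/r)=J_p(a,b)$ of \eqref{eq:MI} is continuous on $[0,\infty)^2$: it is homogeneous of degree one, it is bounded by a multiple of $a+b$ (since $a/r\le 1/p$ and $b/r\le 1/q$ control the logarithms from above), and the conventions $0\log 0=0$ make it tend to $0$ at the origin. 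Hence $\Pi\mapsto I_\Pi$ is continuous on a compact set and attains its maximum.

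Verifying this continuity of $J_p$ at boundary points where $a$ or $b$ vanishes is the main, though minor, obstacle. Once it holds, all three quantities in \eqref{eq:optima} coincide.
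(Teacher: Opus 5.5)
Your proposal is correct and follows the paper's proof essentially step for step. Theorem~\ref{thm:main} with rank-one refinement gives the nontrivial inequality, the reverse is trivial because every PVM is a POVM, compactness via $U(d)$ together with continuity of $I_\Pi$ gives attainment, and Davies' theorem supplies the finite-outcome reduction; your homogeneity-based check that $J_p$ is continuous on $[0,\infty)^2$ fills in a detail the paper only asserts.
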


\begin{proof}
Theorem~\ref{thm:main} gives one inequality; the reverse follows because every PVM is a POVM. The set of rank-one PVMs with labeled outcomes is the continuous image of the compact group \(U(d)\), hence compact. Since \(I_\Pi(X{:}Y)\) is continuous in the projectors, the PVM optimum is attained. The standard finite-outcome reduction identifies the POVM supremum with the accessible information \cite{Davies1978}.
\end{proof}

\section{Variational formula}

\begin{proposition}
\label{prop:variational}
For the binary ensemble \eqref{eq:ensemble},
\begin{equation}
  \Iacc(\E)
  =
  \sup_{\,0\prec T\prec\Id}
  \left\{
    p\Tr(\rho_1\log T)
    +q\Tr\!\bigl(\rho_0\log(\Id-T)\bigr)
    -p\log p-q\log q
  \right\}.
  \label{eq:variational}
\end{equation}
If \(\rho_0,\rho_1\succ0\), the supremum is attained for some \(0\prec T\prec\Id\). For singular states, the supremum need not be attained in this domain. If the supremum is attained at \(T_*\), every rank-one refinement of the PVM formed by its spectral projections is optimal.
\end{proposition}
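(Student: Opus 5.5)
The plan is to prove the two inequalities in \eqref{eq:variational} separately and then treat attainment. For the upper bound, fix \(0\prec T\prec\Id\). Lemma~\ref{lem:spectral} gives \(\Phi_p(T)\le I_\Pi(X{:}Y)\le\Iacc(\E)\), where \(\Pi\) is the spectral PVM of \(T\). Taking the supremum over \(T\) shows that the right side of \eqref{eq:variational} is at most \(\Iacc(\E)\). For the lower bound, fix any finite-outcome POVM \(M\) and use the regularized operators \(T_\delta=(1-2\delta)T_M+\delta\Id\) from the proof of Theorem~\ref{thm:main}. That proof already shows \(\Phi_p(T_\delta)\ge L_\delta\) and \(L_\delta\to I_M(X{:}Y)\) as \(\delta\downarrow0\). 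Hence the supremum of \(\Phi_p\) is at least \(I_M(X{:}Y)\) for every \(M\). Taking the supremum over \(M\) and using Corollary~\ref{cor:Shor}, or directly the finite-outcome reduction, gives equality.

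For the final assertion, suppose the supremum is attained at \(T_*\) and let \(\Pi\) be its spectral PVM. Lemma~\ref{lem:spectral} gives \(\Iacc(\E)=\Phi_p(T_*)\le I_\Pi(X{:}Y)\le\Iacc(\E)\), so \(\Pi\) is optimal. Every rank-one refinement is also optimal, because data processing shows it has at least as much mutual information and it remains bounded by \(\Iacc\).

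Attainment for faithful states is the main step. By Corollary~\ref{cor:Shor} there is an optimal rank-one PVM \(\Pi^*=\{|e_i\rangle\langle e_i|\}\). Set \(A_i=\langle e_i|\rho_1|e_i\rangle>0\) and \(B_i=\langle e_i|\rho_0|e_i\rangle>0\); both are positive because \(\rho_0,\rho_1\succ0\). Define \(T_*=\sum_i t_i^*|e_i\rangle\langle e_i|\) with \(t_i^*=pA_i/(pA_i+qB_i)\in(0,1)\), so that \(0\prec T_*\prec\Id\). By spectral calculus and the equality case of Lemma~\ref{lem:scalar},
\[
\Phi_p(T_*)=\sum_i J_p(A_i,B_i)=I_{\Pi^*}(X{:}Y)=\Iacc(\E).
\]
One subtlety is that \(T_*\) may have repeated eigenvalues, so its spectral projections can be coarser than \(\Pi^*\). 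The computation above still applies, because \(\Tr(\rho\log T_*)=\sum_i\langle e_i|\rho|e_i\rangle\log t_i^*\) holds regardless of degeneracy.

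For the non-attainment remark, I would give an explicit example rather than a proof. Take orthogonal pure states \(\rho_1=|0\rangle\langle0|\) and \(\rho_0=|1\rangle\langle1|\), so that \(\Iacc=h(p)\). Reaching the value \(h(p)\) forces \(\langle0|\log T|0\rangle=0\), hence \(T|0\rangle=|0\rangle\), which contradicts \(T\prec\Id\). Rigorously, operator concavity together with \(\log T\preceq 0\) gives \(\langle0|\log T|0\rangle\le\log\langle0|T|0\rangle<0\), so the value \(h(p)\) is never reached inside the domain. The bulk of the argument is therefore reuse of the proof of Theorem~\ref{thm:main} and Lemma~\ref{lem:spectral}. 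The only care needed is in the attainment step: making sure the constructed \(T_*\) lies strictly inside \((0,\Id)\), which requires faithfulness.
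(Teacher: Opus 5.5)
Your proof is correct. The upper bound, the attainment for faithful states, the optimality of rank-one refinements, and the orthogonal-pure-state example all match the paper.

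Two minor differences there are harmless. Your observation that degenerate \(t_i^*\) cause no trouble is right. Your bound \(\langle0|\log T|0\rangle\le\log\langle0|T|0\rangle<0\) is valid, though ordinary concavity suffices and operator concavity is not needed. The paper instead notes that \(\log T\) and \(\log(\Id-T)\) are negative definite. It also exhibits an approximating sequence \(T_\varepsilon\), which you correctly make unnecessary by invoking the already-proved formula.

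The genuine difference is the lower bound \(\sup_T\Phi_p(T)\ge\Iacc(\E)\).

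The paper takes an optimal rank-one PVM from Corollary~\ref{cor:Shor} and places \(t_i^*=pA_i/(pA_i+qB_i)\) on its basis. It then pushes endpoint values into \((0,1)\) and uses only Lemma~\ref{lem:scalar} and continuity.

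You instead apply the operator-Jensen chain \(\Phi_p(T_\delta)\ge L_\delta\to I_M(X{:}Y)\) from the proof of Theorem~\ref{thm:main} to an arbitrary finite-outcome POVM, then take the supremum over \(M\).

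Your route buys two things:
\begin{itemize}
\item The formula itself no longer needs the existence of an optimal PVM, so no compactness argument is required.
\item Combined with your upper bound via Lemma~\ref{lem:spectral}, it gives the chain \(\Iacc\le\sup_T\Phi_p\le\sup_\Pi I_\Pi\le\Iacc\). The supremum form of Shor's conjecture thus drops out as a byproduct, and the Corollary is needed only for attainment when \(\rho_0,\rho_1\succ0\).
\end{itemize}

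The paper's route is more elementary at this step: it is purely scalar once the Corollary is available. It also reuses the same diagonal construction that immediately yields attainment in the faithful case.
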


\begin{proof}
For every \(0\prec T\prec\Id\), Lemma~\ref{lem:spectral} gives
\[
  \Phi_p(T)\le I_\Pi(X{:}Y)\le\Iacc(\E),
\]
where \(\Pi\) is formed by the spectral projections of \(T\). Thus the right-hand side of \eqref{eq:variational} is at most \(\Iacc(\E)\).

Conversely, let \(\Pi=\{\Pi_i\}\) be an optimal rank-one PVM, whose existence follows from Corollary~\ref{cor:Shor}, and set
\[
  A_i=\Tr(\rho_1\Pi_i),
  \qquad
  B_i=\Tr(\rho_0\Pi_i).
\]
For \(pA_i+qB_i>0\), let
\[
  t_i^*=\frac{pA_i}{pA_i+qB_i}\in[0,1].
\]
If \(A_i=B_i=0\), choose any \(t_i^*\in(0,1)\). Choose \(t_i^{(n)}\in(0,1)\) with \(t_i^{(n)}\to t_i^*\), and set
\[
  T_n=\sum_i t_i^{(n)}\Pi_i.
\]
Lemma~\ref{lem:scalar} gives
\[
  \Phi_p(T_n)
  \longrightarrow
  \sum_iJ_p(A_i,B_i)
  =
  I_\Pi(X{:}Y)
  =
  \Iacc(\E),
\]
which proves the reverse inequality.

If both states are full rank, then \(A_i,B_i>0\) for every \(i\), so
\[
  T_*=\sum_i\frac{pA_i}{pA_i+qB_i}\Pi_i
\]
lies in the domain \(0\prec T\prec\Id\) and attains the supremum.

Nonattainment can occur for singular states. Let
\(P_1=|1\rangle\langle1|\) and \(P_0=|0\rangle\langle0|\), where
\(\langle0|1\rangle=0\), and take \(\rho_1=P_1\), \(\rho_0=P_0\).
Measuring in a basis containing \(|0\rangle\) and \(|1\rangle\)
reveals the label, so their accessible information is
\[
  h(p):=-p\log p-q\log q.
\]
For every \(0\prec T\prec\Id\), both \(\log T\) and
\(\log(\Id-T)\) are negative definite, so \(\Phi_p(T)<h(p)\).
On the other hand, for \(0<\varepsilon<1/2\),
\[
  T_\varepsilon
  =(1-\varepsilon)P_1+\varepsilon P_0
  +\frac12(\Id-P_0-P_1)
\]
satisfies \(0\prec T_\varepsilon\prec\Id\) and
\[
  \Phi_p(T_\varepsilon)
  =h(p)+\log(1-\varepsilon)
  \longrightarrow h(p).
\]
Thus the supremum in \eqref{eq:variational} is not attained for this
ensemble.

Finally, if the supremum is attained at \(T_*\), Lemma~\ref{lem:spectral} shows that the PVM formed by its spectral projections, and hence every rank-one refinement, is optimal.
\end{proof}

The objective in \eqref{eq:variational} is concave on the convex domain \(0\prec T\prec\Id\).

\section{Relation to measured \texorpdfstring{\(f\)}{f}-divergences}

Define
\[
  f_p(t)
  =
  \begin{cases}
    pt\log t-(pt+q)\log(pt+q),&t\ge0,\\
    +\infty,&t<0,
  \end{cases}
\]
where \(0\log0=0\). For \(t>0\),
\[
  f_p''(t)=\frac{pq}{t(pt+q)}>0.
\]
Moreover, \(\lim_{t\downarrow0}f_p(t)=-q\log q=f_p(0)\), so the
extension by \(+\infty\) to \(t<0\) is convex and lower semicontinuous.
The mutual information generated by a measurement \(M\) is the
classical \(f_p\)-divergence between the two conditional outcome
distributions:
\[
  I_M(X{:}Y)
  =
  \sum_y b_y f_p\!\left(\frac{a_y}{b_y}\right).
\]
Here terms with \(b_y=0\) are defined by the corresponding limit, and \(0f_p(0/0):=0\).
Related sufficient conditions for equality of measured and projectively measured \(f\)-divergences were given earlier by Hiai \cite[Theorem~5.8]{Hiai2021}. The reparametrized criterion of Fang, Fawzi, and Fawzi is the form used here.
Let \(f_p^*\) be the Fenchel conjugate of \(f_p\) and set
\(\psi(t)=p\log(t/p)\) for \(0<t<1\). A direct calculation gives
\[
  \operatorname{dom}f_p^*
  =
  (-\infty,-p\log p)
\]
and
\[
  f_p^*(z)
  =
  q\log q-q\log\!\left(1-pe^{z/p}\right),
  \qquad z<-p\log p.
\]
Thus \(\psi\) maps \((0,1)\) bijectively onto \(\operatorname{dom}f_p^*\), \(\psi\) is operator concave, and
\[
  (f_p^*\circ\psi)(t)
  =
  q\log q-q\log(1-t).
\]
The last function is operator convex. The hypotheses of Theorem 2 of Fang, Fawzi, and Fawzi are therefore satisfied \cite{FFF}. Their theorem gives equality of the POVM and PVM suprema; rank-one refinement and compactness give the maximum in Corollary \ref{cor:Shor}. Specializing their proof to \(f_p\) gives the same two applications of Jensen's operator inequality used above.

\section*{AI Usage Disclosure}

GPT-5.6 Pro was used in exploratory discussions during the development of the proof and to assist in drafting a preliminary version of the manuscript. In particular, the central operator construction and a nearly complete initial proof arose from those discussions. The author checked and revised the argument and takes full responsibility for the paper's contents and correctness.

\end{document}